\newcommand{\beq}[0]{\begin{equation}}
\newcommand{\eeq}[0]{\end{equation}}
\newcommand{\e}{\varepsilon}
\newcommand{\thet}{\vartheta}
\newcommand{\la}{\langle}
\newcommand{\ra}{\rangle}
\newcommand{\ds}{\displaystyle}
\newcommand{\avh}{\la H_1 \ra}
\newcommand{\pa}{\partial}
\newcommand{\w}{\omega}
\newcommand{\ud}{{\mathrm d}}
\newcommand{\bc}{\begin{center}}
\newcommand{\ec}{\end{center}}
\theoremstyle{plain} \newtheorem{theorem}{Theorem}
\theoremstyle{plain} \newtheorem{lemma}{Lemma}
\begin{document}

\title{Non-Existence of phase-shift breathers in one-dimensional Klein-Gordon lattices with nearest-neighbor interactions}

\author{Vassilis Koukouloyannis}

\affiliation{Department of Physics\\
Section of Astrophysics, Astronomy and Mechanics\\ 
Aristotle University of Thessaloniki\\ 
GR-54124 Thessaloniki, Greece}

\keywords{Discrete Breathers, Multibreathers, Phase-shift breathers, Phase-shift multibreathers}

\begin{abstract}
It is well known that one-dimensional Klein-Gordon lattices with nearest-neighbor interactions can support multibreathers with phase differences between the successive ``central'' oscillators $\phi_i=0\ \mbox{or}\ \pi$ (standard configurations). In this paper we prove that in this kind of systems, the standard configurations are the only possible ones, so phase-shift breathers (configurations with $\phi_i\neq0,\, \pi$) cannot be supported. This fact also determines the linear stability of the existing multibreathers.
\end{abstract}

\maketitle

\section{Introduction}

Since~\cite{sietak,pa90} much interest has been drawn in the study of space-localized and time-periodic motions in lattices of coupled oscillators. These motions are called {\it discrete breathers} (DB) if the oscillation is localized around one ``central'' lattice site, while, if there are more than one central oscillators, the motion is called {\it multibreather} (MB) or {\it multi-site breather}. The wide interest about discrete breathers- miltibreathers is underlined by the numerous review papers there exist on this subject (e.g. ~\cite{flach1,macrev,flach2,aubrev}).

One of the most popular systems in which such motions are studied is the well-known Klein-Gordon (KG) chain. The classical KG setting consists of a one-dimensional lattice of oscillators each coupled with its nearest-neighbors (NN). Since the first proof of existence of DBs \cite{macaub}, there have been several papers dealing with the issue of existence and stability of MBs in KG chains (e.g. \cite{ahn1, sepmac,flachproof}). In \cite{koukmac} a methodology for proving the existence of multi-site breathers was introduced based in the work of \cite{mackay1,mackay2} and using also the terminology of \cite{koukicht1}. This methodology was generalized for a generic Klein-Gordon chain in \cite{KK09} and provided general persistence and stability conditions independently of the precise form of the on-site potential. The stability results of \cite{KK09}, have been shown in \cite{IJBC} to be in correspondence to the results of \cite{ACSA03}, which were already been obtained by using the band theory of \cite{aubrev}. In a recent work \cite{Rapti13lri} an alternative proof of the stability theorem of \cite{ACSA03} has been presented by using the same band theory, providing also a proof of the eigenvalue counting result of \cite{ACSA03}. These results have been generalized in \cite{pelisak} by also considering ``holes'' between the central oscillators. In a different context, similar results have been recently obtained \cite{yoshi} by considering a diatomic FPU chain.

The existing multibreather solutions are categorized in terms of the phase differences between the central oscillators. It is well known that KG chains can support multibreathers with phase differences between the successive central oscillators $\phi_i=0, \pi$. These are the {\it standard configurations}. Although there is strong evidence that {\it phase-shift breathers} i.e.\,multibreathers with phase differences $\phi_i\neq0, \pi$, cannot exist, a rigorous proof of this fact had not been presented. 

In this work we prove that the one-dimensional Klein-Gordon lattice with nearest-neighbor interactions cannot support phase-shift breathers, by proving that the persistence conditions provided by \cite{KK09} do not have solutions other than the standard ones $\phi_i=0,\pi$. 

The fact of the non-existence of phase-shift breathers in KG chains determines also the stability of the standard configurations. In particular, in \cite{KK09} the main theorem has been stated under the assumption of non-existence of phase-shift configurations. On the other hand, if we consider a KG chain with interactions between its oscillators further than mere the nearest-neighbors, phase-shift breathers {\it can} be supported \cite{lri} and consequently the stability picture radically changes.

The paper is organized as follows; in section \ref{summary} we present briefly the methodology for the derivation of the persistence conditions for the existence of multibreathers in KG chains developed in \cite{KK09}, while we introduce some terminology. In section \ref{nonexistence} the main theorem about the non-existence of phase-shift breathers is proven. In section \ref{stability} we discuss the implication of this theorem to the stability of the standard MB configurations.

\section{Persistence and stability of multibreathers in 1D Klein-Gordon lattices with nearest-neighbor interactions}\label{summary}
In this section we will shortly present the main results of \cite{KK09}, concerning the existence of multibreathers (MB) in a Klein-Gordon (KG) chain. The classical KG setting is defined as a 1D lattice of coupled oscillators each one moving in a nonlinear potential $V(x)$ possessing a local minimum at $x=0$ ($V'(0)=0, V''(0)=\w_p^2>0)$. Each oscillator is coupled with its two nearest neighbors (NN) with a linear coupling force through a coupling constant $\e$, as shown in Fig.\ref{fig:KGchain}.
\begin{figure}[h]
	\centering
		\includegraphics[width=8cm]{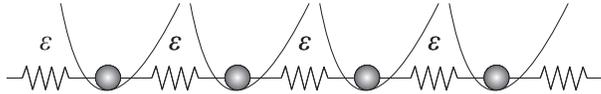}
		\caption{A one-dimensional Klein-Gordon lattice with nearest-neighbor interactions}
	\label{fig:KGchain}
\end{figure}\\
The Hamiltonian of a Klein-Gordon chain with nearest neighbor interactions is the following
\beq H=H_0+\e H_1=\sum_{i=-\infty}^{\infty}\left[ \frac{1}{2}p_i^2 +V(x_i)\right] +\frac{\e}{2}\sum_{i=-\infty}^{\infty}\left(x_i-x_{i-1}\right)^2,\label{ckg}\eeq
which leads to the equations of motion
$$\ddot{x_i}=-V'(x_i)+\e(x_{i-1}-2x_i+x_{i+1}).$$
This system is well known to support discrete breather, as well as, multibreather solutions (e.g. \cite{macaub,ahn1, mackay1, KK09}).

The key notion to the proof of the existence of multibrethears is this of the anticontinuum limit. This is the limit $\e\rightarrow0$ where the chain consists of uncoupled oscillators. In this limit we consider all the oscillators of the chain at rest except for $n+1$ adjacent ``central'' ones which move in periodic orbits of frequency $\w$, but with arbitrary phases. This configuration defines a trivially space-localized and time-periodic motion. But, not all of these configurations survive when coupling is introduced ($\e\neq0$) to provide a multibreather. In order for these motions to persist for $\e\neq0$, specific conditions on the phase differences between the oscillators must be satisfied, as well as, some rather generic non-degeneracy conditions.

In \cite{mackay1} it was shown that multibreathers
correspond to critical points of $H^{\mathrm{eff}}$ which to leading order of
approximation is given by $H^{\mathrm{eff}}=H_0(I_i)+\e\avh(\phi_i, I_i)$ \cite{koukmac}. The variables $\phi_i=w_{i+1}-w_i$ denote the $n$ phase differences of the $n+1$ successive central oscillators, while $I_i$ are given by $I_i=\sum_{j=i}^{n} J_j$, where $(J_i, w_i)$ are the action-angle variables of a single oscillator. Note that, if we had considered also ``holes'' between the central oscillators, i.e.\. oscillators between the central ones which in the anticontinuum limit lie at rest, a higher order approximation of $H^{\mbox{eff}}$ should be used.

The average value of the coupling part of the Hamiltonian
$$\avh(\phi_i, I_i)=\frac{1}{T}\oint H_1(w_0, \phi_i, I_i)\ud t$$
is calculated along the orbits in the anti-continuum limit $\e=0$.

This yields the conclusion that the persistence conditions for the existence of $n+1$-site multibreathers are
\beq\frac{\pa \avh}{\pa \phi_i}=0,\quad i=1\ldots n,\label{gen_per}\eeq
as far as two non-degeneragy conditions hold. The first one is the non-resonance condition of the breather frequency $\w$ with the phonon frequency $\w_p$ i.e. $\w_p\neq k\w$. The second condition is  the anharmonicity condition $\frac{\pa \w}{\pa J}\neq0$ which implies that the oscillation frequency of a single oscillator depends on the oscillation amplitude.

By using the fact that the motion of the central oscillators for $\e=0$ can be described by
\beq x_i=\sum_{m=0}^\infty A_m\cos(mw_i)\label{xdevel}\eeq
the average value of $H_1$ becomes (\cite{KK09} appendix A)
$$\avh=-\frac{1}{2}\sum_{m=1}^\infty\sum_{s=1}^{n}A_m^2\cos(m\phi_s)$$
and the persistence conditions (\ref{gen_per}) become in the case of Klein-Gordon chains with nearest neighbor interactions,
\beq\frac{\pa \avh}{\pa \phi_i}=0\Rightarrow M(\phi_i)\equiv\sum_{m=1}^{\infty}mA_m^2\sin(m\phi_i)=0,\quad i=1\ldots n.\label{per_r1}\eeq
This system of equations possesses the obvious solutions $\phi_i=0, \pi$, while, as it will be shown in the next 
section it possesses no others.

Since only $\phi_i=0\ \mbox{or}\ \pi$ can be supported, the multibreathers should have any pair of adjacent central oscillators moving either in-phase or anti-phase. These configurations are called the {\it standard configurations}. In figure \ref{3site} all the possible standard 3-site breather configurations are shown. These are (a) the in-phase \{$\phi_1=\phi_2=0$\} configuration, (b) the anti-phase \{$\phi_1=\phi_2=\pi$\} configuration and (c) the mixed one \{$\phi_1=0,\ \phi_2=\pi$\}. The on-site potential used in order to acquire these figures is $V(x)=x^2/2-0.15x^3/3-0.05x^4/4$ and the depicted multibreathers correspond to coupling $\e=0.02$ and frequency $\w=0.845$.  

\begin{figure}[ht]
\begin{tabular}{ccc}
\includegraphics[width=5cm]{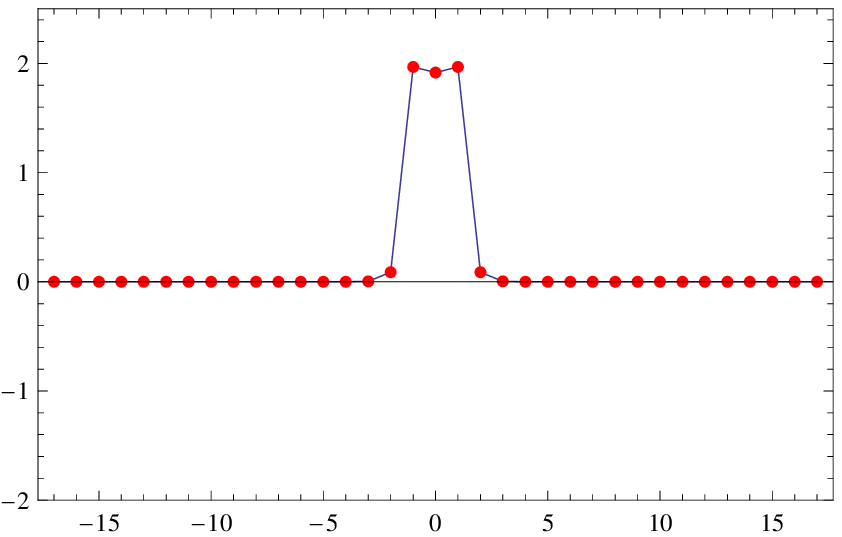}&\includegraphics[width=5cm]{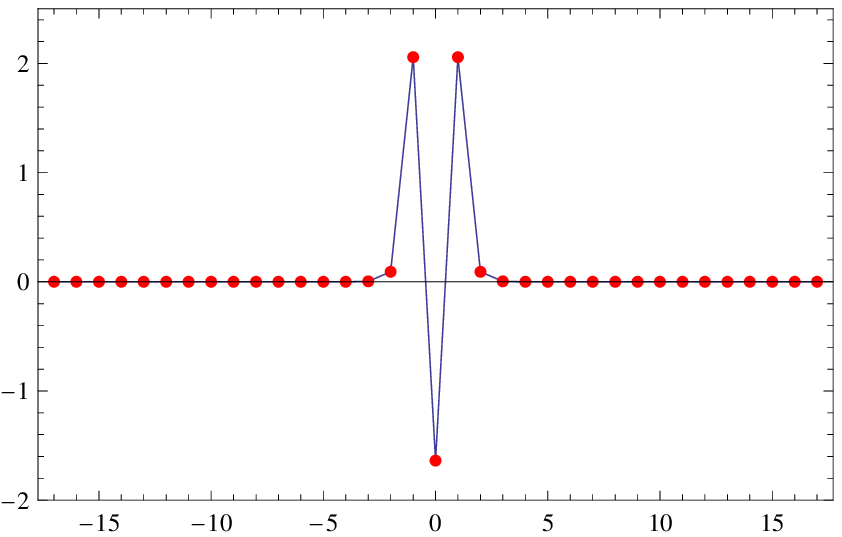}&\includegraphics[width=5cm]{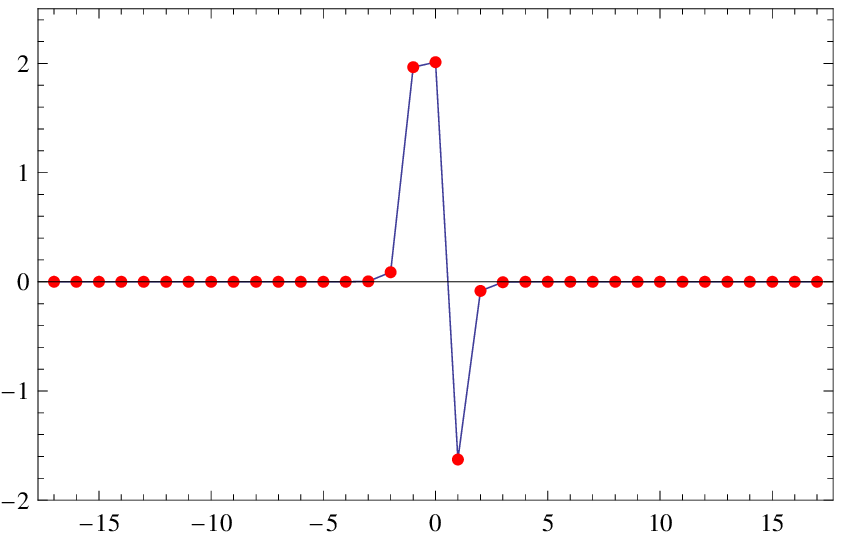}\\
(a)&(b)&(c)
\end{tabular}
\caption{Snapshots of all the possible 3-site breather configurations. In (a) the in-phase \{$\phi_1=\phi_2=0$\} configuration is shown, in (b) the anti-phase \{$\phi_1=\phi_2=\pi$\} and in (c) the mixed one \{$\phi_1=0,\ \phi_2=\pi$\}.}
\label{3site}
\end{figure}
In \cite{lri} it has been shown that, by introducing interactions with range beyond this of just the nearest-neighbor ones, the persistence condition (\ref{per_r1}) are altered, so {\it phase-shift configurations} ($\phi_i\neq0\ \mbox{or}\ \pi$) are also permitted. One could intuitively predict the existence of phase-shift configurations, since a 1D chain with long-range interactions can be analogous to a nearest-neighbor interaction 2D lattice which supports such configurations. For example, a three-site breather in a next-nearest-neighbor 1D lattice is equivalent to a three-site (triangular) breather in a nearest-neighbor hexagonal lattice, where the existence of a vortex-breather configuration with $\phi_1=\phi_2=2\pi/3\ \mbox{or}\ 4\pi/3$ is well established \cite{koukmac}.

\section{Proof of nonexistence of phase-shift breathers in one-dimensional Klein-Gordon chains with nearest-neighbor interactions}\label{nonexistence}
In this section we will prove that phase-shift breathers\footnote{Note the use og the term ``phase-shift breathers'' instead of ``phase-shift multibreathers''. This is done for simplicity, since the phase-shift breathers are by definition multi-site being characterized by the phase difference between the adjacent central oscillators.} ($\phi_i\neq0\, \mbox{or}\, \pi$) cannot be supported in one-dimensional Klein-Gordon chains with nearest-neighbor interactions. This fact will be proven by showing that the persistence conditions (\ref{per_r1}) have only the $\phi_i=0,\pi$ solutions in the $\phi_i\ \in\ [0, 2\pi)$ interval. 

In order to prove our main theorem we have first to prove two lemmas.

\begin{lemma}\label{lem1}
The solutions of $M(\phi)=0$ coincide with the solutions of $\ds I(\phi)\equiv\int_0^{2\pi}\dot{x}(w)~x(w-\phi) \ud w = 0$.
\end{lemma}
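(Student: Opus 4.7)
The plan is to show that $I(\phi)$ is simply a nonzero scalar multiple of $M(\phi)$, so the two functions have identical zero sets. The computation is essentially a direct application of Fourier orthogonality to the expansion (\ref{xdevel}).

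First I would write out $x(w-\phi)$ using the cosine angle-subtraction formula, giving
\beq x(w-\phi)=\sum_{k=0}^\infty A_k\bigl[\cos(kw)\cos(k\phi)+\sin(kw)\sin(k\phi)\bigr].\eeq
Next I would express $\dot{x}$ in terms of $w$. Since in the anticontinuum limit the central oscillator evolves with $w=\w t+\text{const}$, we have $\dot w=\w$ and therefore
\beq \dot{x}(w)=\w\frac{\ud x}{\ud w}=-\w\sum_{m=1}^\infty mA_m\sin(mw).\eeq

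I would then substitute both expressions into $I(\phi)$, interchange sum and integral, and invoke the standard orthogonality relations $\int_0^{2\pi}\sin(mw)\cos(kw)\,\ud w=0$ and $\int_0^{2\pi}\sin(mw)\sin(kw)\,\ud w=\pi\delta_{mk}$ (for $m,k\geq 1$). The cosine cross terms annihilate, and only the diagonal sine-sine terms survive, yielding
\beq I(\phi)=-\pi\w\sum_{m=1}^\infty mA_m^2\sin(m\phi)=-\pi\w\,M(\phi).\eeq

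Since $\w>0$ is a fixed constant (the breather frequency), this proportionality immediately gives $I(\phi)=0\Leftrightarrow M(\phi)=0$, proving the lemma. I do not expect any real obstacle here: the only technical point worth a brief remark is the interchange of summation and integration, which is justified because the Fourier series of the smooth periodic orbit $x(w)$ converges (absolutely and uniformly, by the non-resonance and smoothness assumptions on $V$). The genuine content of the lemma is not difficulty of proof but its usefulness in the next step: rewriting $M(\phi)=0$ as the integral $I(\phi)=0$ will allow the subsequent lemma to analyze the zeros through properties of $x(w)$ (monotonicity, convexity, shape of a single-oscillator trajectory) rather than through the opaque Fourier coefficients $A_m$.
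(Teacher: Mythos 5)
Your proof is correct and arrives at exactly the paper's identity $I(\phi)=-\pi\w\,M(\phi)$ (equation (\ref{eq:eq})) by the same underlying mechanism, namely Fourier orthogonality applied to the expansion (\ref{xdevel}). The only difference is presentational: the paper routes the computation through the averaged autocorrelation function $N(\phi)$ and the differentiation property of the convolution, whereas you expand $\dot{x}(w)\,x(w-\phi)$ directly and integrate term by term, which is if anything slightly more economical.
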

\begin{proof}
As we have already mentioned, the displacement of an uncoupled oscillator from the equilibrium can be described as an even $2\pi$-periodic function with respect to $w=\w t+\thet$, as
\beq x(w)=\sum_{m=0}^\infty A_m\cos(mw).\label{eq:x}\eeq
We define the function $N(\phi)$ as the opposite of the averaged autocorrelation function of $x(w)$
\beq N(\phi)=-\frac{1}{2\pi}\int_0^{2\pi}x(w)x(\phi-w)\ud w.\label{eq:m}\eeq
By substituting (\ref{eq:x}) into (\ref{eq:m}) we get
$$\ds N(\phi)=-\frac{1}{2\pi}\sum_{n=0}^\infty\sum_{m=0}^\infty A_nA_m\int_0^{2\pi}\cos(nw)\cos[m(\phi-w)]\ud w$$
or
$$\hspace{-0.5cm} N(\phi)=-\frac{1}{4\pi}\sum_{n=0}^\infty\sum_{m=0}^\infty A_nA_m\int_0^{2\pi}\{\cos[(n-m)w+m\phi]+\cos[(n+m)w-m\phi]\}\ud w$$
which leads to
\beq N(\phi)=-\frac{1}{2}\sum_{m=0}^\infty A_m^2\cos(m\phi).\label{eq:m2}\eeq
By differentiation of (\ref{eq:m2}) with respect to $\phi$ we get
\beq\frac{\ud}{\ud \phi}N(\phi)=\frac{1}{2}\sum_{m=0}^\infty mA_m^2\sin(m\phi)=\frac{1}{2}M(\phi).\label{dn2}\eeq
On the other hand, by using the differentiation properties of the convolution function, we get from (\ref{eq:m}),
\beq\frac{\ud}{\ud \phi}N(\phi)=-\frac{1}{2\pi}\int_0^{2\pi}\frac{\ud x(w)}{\ud w}~x(\phi-w)\ud w=-\frac{1}{2\pi\w}\int_0^{2\pi} \dot{x}(w)~x(\phi-w)\ud w,\label{dn1}\eeq
where the dot denotes differentiation with respect to time. From (\ref{dn2}) and (\ref{dn1}) we get finally
\beq M(\phi)\equiv\sum_{m=0}^\infty mA_m^2\sin(m\phi)=-\frac{1}{\pi\w}\int_0^{2\pi}\dot{x}(w)~x(\phi-w)\ud w=-\frac{1}{\pi\w}\int_0^{2\pi}\dot{x}(w)~x(w-\phi)\ud w=-\frac{1}{\pi\w}I(\phi).\label{eq:eq}\eeq
So, 
\beq M(\phi)=0\Leftrightarrow I(\phi)=0\label{eq:MequivI}\eeq
\end{proof}

\noindent {\bf Remark:} 
By the differentiation properties of the convolution function, we could get also from (\ref{eq:m}),
$$N'(\phi)=-\frac{1}{2\pi}\int_0^{2\pi}x'(w)~x(\phi-w)\ud w
=-\frac{1}{2\pi}\int_0^{2\pi} x(w)~x'(\phi-w)\ud w,$$
where the prime denotes differentiation with respect to the argument of the function, or
\beq\int_0^{2\pi}\dot{x}(w)~x(x-\phi)\ \ud w=-\int_0^{2\pi}x(w)~\dot{x}(w-\phi)\ \ud w.\label{eq:oposite}\eeq
In the last equation we have used the symmetries of the displacement $x(w)$, i.e. $x(-w)=x(w)$ and $\dot{x}(-w)=-\dot{x}(w)$. This last fact will be used in the proof of lemma \ref{lem2}.
\begin{lemma}\label{lem2}
Equation $I(\phi)=0$ has no solutions for $\phi \in (0,\ \pi)\cup (\pi,\ 2\pi)$.
\end{lemma}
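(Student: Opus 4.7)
The plan is to exploit the parity symmetries of $x(w)$ to fold $I(\phi)$ into an integral over half a period, and then determine its sign from the strict monotonicity of $x$ between its two turning points.

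First I would split the integral at $w=\pi$ and, in the piece over $[\pi,2\pi]$, substitute $w\mapsto 2\pi-w$. Together with $x(-w)=x(w)$, $\dot x(-w)=-\dot x(w)$ and $2\pi$-periodicity, this yields
\beq
I(\phi)=\int_0^\pi \dot x(w)\bigl[x(w-\phi)-x(w+\phi)\bigr]\,\ud w. \label{eq:Isplit}
\eeq

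Next I would invoke energy conservation $\frac{1}{2}\dot x^2+V(x)=E$ for the uncoupled oscillator: $\dot x$ vanishes on the orbit only at the two turning points, and with the phase chosen so that (\ref{eq:x}) holds (the displacement is a pure cosine series, hence even in $w$) these are precisely $w=0$ and $w=\pi$. Consequently $x$ is strictly monotone on each half-period, and in particular $\dot x(w)<0$ strictly on $(0,\pi)$.

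The heart of the argument is to show that the bracket in (\ref{eq:Isplit}) is strictly positive for every $w,\phi\in(0,\pi)$. By evenness and $2\pi$-periodicity, $x(w-\phi)=x(|w-\phi|)$, while $x(w+\phi)=x(w+\phi)$ when $w+\phi\lesq\pi$ and $x(w+\phi)=x(2\pi-w-\phi)$ otherwise, so that in both instances $x$ is evaluated at a point of $[0,\pi]$. A short two-case check yields the strict inequality $|w-\phi|<\min(w+\phi,\,2\pi-w-\phi)$: the first alternative follows from the triangle inequality together with $w,\phi>0$, while the second reduces after squaring to $(\pi-w)(\pi-\phi)>0$, which holds since $w,\phi<\pi$. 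Because $x$ is strictly decreasing on $[0,\pi]$, this gives $x(w-\phi)>x(w+\phi)$.

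Combining the two strict signs, the integrand in (\ref{eq:Isplit}) is strictly negative on $(0,\pi)$, whence $I(\phi)<0$ for every $\phi\in(0,\pi)$. Finally, $I$ is odd in $\phi$ (this is the content of (\ref{eq:oposite}), or it follows directly from $w\mapsto -w$ together with $2\pi$-periodicity), so $I(\phi)>0$ on $(\pi,2\pi)$. I expect the main obstacle to be the monotonicity comparison in the regime $w+\phi>\pi$, where the sign relies on a genuine cancellation; the rest is a careful bookkeeping of parities and periodicities.
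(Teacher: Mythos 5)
Your proof is correct, but it takes a genuinely different route from the paper's. The paper writes the integrand of $I(\phi)$ as (half of) the signed cross product $x(w)\dot x(w-\phi)-\dot x(w)\,x(w-\phi)$ of the phase-space vectors ${\bf x}(w)=(x(w),\dot x(w))$ and ${\bf x}(w-\phi)$, and invokes the pointwise identity $|{\bf x}(w)\times{\bf x}(w-\phi)|=\|{\bf x}(w)\|\,\|{\bf x}(w-\phi)\|\sin\phi$, which yields the sign of $I(\phi)$ in one line via (\ref{eq:I_lem}). That step identifies the shift $\phi$ of the \emph{angle variable} with the geometric polar angle between the two phase-space points --- automatic for a circle traversed uniformly, but not for a general anharmonic orbit --- whereas your argument avoids this identification entirely: you fold the integral onto $[0,\pi]$ using the parities of $x$ and $\dot x$, and reduce the sign of $I(\phi)$ to the elementary inequality $|w-\phi|<\min(w+\phi,\,2\pi-w-\phi)$ together with the strict monotonicity of $x$ between its two turning points $w=0,\pi$ (which is where the dynamics enters, through energy conservation). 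Your route is longer, but every step is checkable and it is arguably on firmer ground than the cross-product identity. Two small points to tidy up: (i) the normalization (\ref{eq:x}) fixes evenness of $x(w)$ but not whether $w=0$ is the maximum or the minimum of the orbit, so either fix a sign convention or note that in the opposite case both factors of your integrand flip sign simultaneously, so $I(\phi)<0$ on $(0,\pi)$ either way; (ii) the integrand vanishes at the endpoints $w=0,\pi$, but being continuous and strictly negative on the open interval it still integrates to a strictly negative number, which is all the lemma requires. Finally, the oddness of $I$ that you use to pass to $(\pi,2\pi)$ is indeed exactly (\ref{eq:oposite}) after the substitution $w\mapsto w+\phi$.
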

\begin{proof}

Let ${\bf x}(w)$ be the vector ${\bf x}(w)=\{x(w), \dot{x}(w)\}$. So, the cross product of ${\bf x}(w)$ with ${\bf x}(w-\phi)$ reads\footnote{Here, we have considered the $\left|\bullet\right|$ notation for the signed value of the cross product and the $\left\|\bullet\right\|$ notation for the norm of the vector.}

$$|{\bf x}(w)\times{\bf x}(w-\phi)|=x(w)\dot{x}(w-\phi)-\dot{x}(w)x(w-\phi)=\left\|{\bf x}(w)\right\| \left\|{\bf x}(w-\phi)\right\|\sin\phi$$
So,
$$\int_0^{2\pi}[x(w)\dot{x}(w-\phi)-\dot{x}(w)x(w-\phi)]\ud w=\sin\phi\int_0^{2\pi}\left\|{\bf x}(w)\right\|\left\|{\bf x}(w-\phi)\right\|\ud w$$
or, by using (\ref{eq:oposite}),
$$-2\int_0^{2\pi}\dot{x}(w)~x(w-\phi)~\ud w=\sin\phi\int_0^{2\pi}\left\|{\bf x}(w)\right\|\left\|{\bf x}(w-\phi)\right\|~\ud w$$
and finally
\beq I(\phi)\equiv\int_0^{2\pi}\dot{x}(w)~x(w-\phi)\ud w=-\frac{\sin\phi}{2}\int_0^{2\pi}\left\|{\bf x}(w)\right\|\left\|{\bf x}(w-\phi)\right\|\ud w.\label{eq:I_lem}\eeq
Since $\left\|x(w)\right\|>0$ $\forall~w$, for the unperturbed oscillator, the sign of $I(\phi)$ is defined by the value of $\phi$. 
So, $I(\phi)<0$ for $0<\phi<\pi$ and $I(\phi)>0$ for $\pi<\phi<2\pi$. 
\end{proof}

\begin{theorem}\label{thm2}
In a one-dimensional Klein-Gordon lattice with nearest-neighbor interactions only standard configuration ($\phi_i=0\ \mbox{or}\ \pi$) multibreathers can be supported.
\end{theorem}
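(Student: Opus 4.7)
The plan is to observe that Theorem~\ref{thm2} is essentially an immediate corollary of Lemmas~\ref{lem1} and~\ref{lem2}, and so the proof reduces to assembling these two facts in the correct order. The persistence conditions for an $(n+1)$-site multibreather are, by~(\ref{per_r1}), the decoupled system $M(\phi_i)=0$ for $i=1,\ldots,n$, with each $\phi_i\in[0,2\pi)$. Because the equations are independent in the $\phi_i$, it suffices to analyze the scalar equation $M(\phi)=0$.

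Next, I would invoke Lemma~\ref{lem1}, which asserts the equivalence $M(\phi)=0\Leftrightarrow I(\phi)=0$, to rewrite each persistence condition as $I(\phi_i)=0$. Then Lemma~\ref{lem2} rules out any root of $I$ in the open intervals $(0,\pi)$ and $(\pi,2\pi)$, since there $I(\phi)$ inherits a definite sign from $\sin\phi$ via the identity
\beq
I(\phi)=-\frac{\sin\phi}{2}\int_0^{2\pi}\left\|{\bf x}(w)\right\|\left\|{\bf x}(w-\phi)\right\|\ud w,
\eeq
together with the strict positivity of $\|{\bf x}(w)\|$ for the unperturbed oscillator. Hence the only admissible values in $[0,2\pi)$ are $\phi_i=0$ and $\phi_i=\pi$, which are precisely the standard configurations.

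There is really no hard step remaining; the substantive content has already been extracted into the two lemmas. The only subtlety worth flagging explicitly is that the persistence system~(\ref{per_r1}) factorizes over the $n$ phase differences, so the scalar result transfers componentwise without any compatibility issue between the $\phi_i$. I would also briefly remark that the endpoint $\phi_i=0$ need not be listed separately from $\phi_i=2\pi$ by the choice of fundamental domain $[0,2\pi)$, so the enumeration $\{0,\pi\}$ is exhaustive. With these remarks the proof is a two-line deduction from the preceding lemmas.
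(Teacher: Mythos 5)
Your proof is correct and follows essentially the same route as the paper: it reduces the system (\ref{per_r1}) to the scalar equation $M(\phi)=0$ on $[0,2\pi)$ and then combines Lemma~\ref{lem1} and Lemma~\ref{lem2} (equivalently, the sign-definite formula $M(\phi)=\frac{\sin\phi}{2\pi\w}\int_0^{2\pi}\left\|{\bf x}(w)\right\|\left\|{\bf x}(w-\phi)\right\|\ud w$) to exclude all roots other than $\phi=0,\pi$. The only cosmetic difference is that the paper also records explicitly that $\phi=0,\pi$ are solutions by construction and states the resulting sign of $M(\phi)$ on each subinterval, which is used later for the stability discussion.
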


\begin{proof}
In order for a MB configuration to be supported in a classical Klein-Gordon chain described by (\ref{ckg}), the phase-differences $\phi_i$ between successive oscillators must satisfy eqs.(\ref{per_r1}).

The system (\ref{per_r1}) is decomposed into $n$ independent equations, so it is sufficient to check for solutions of $M(\phi)=0$ and since $M(\phi)$ is $2\pi$-periodic it is sufficient to check for solutions with $\phi \in [0,\ 2\pi)$.

Since $\phi=0,\, \pi$ are by construction solutions of $M(\phi)=0$, the standard configuration multibreathers are always supported.

On the other hand, by (\ref{eq:eq}) and (\ref{eq:I_lem}), we get that

\beq M(\phi)=\frac{\sin\phi}{2\pi\w}\int_0^{2\pi}\left\|{\bf x}(w)\right\|\left\|{\bf x}(w-\phi)\right\|\ud w.\label{eq:M_final}\eeq
So,
$$M(\phi)\neq0\ \mbox{for}\ \phi \in (0,\ \pi)\cup(\pi,\ 2\pi),$$
and phase-shift ($\phi\neq0\ \mbox{or}\ \pi$) configurations cannot be supported.

In addition, from (\ref{eq:M_final}) we get that $M(\phi)>0$ for $0<\phi<\pi$ and $M(\phi)<0$ for $\pi<\phi<2\pi$.
\end{proof}

\noindent{\bf Remark:} One could think the case of phonobreathers (see e.g. \cite{CAR11}) as a case where phase-shift configurations exist in a Klein-Gordon chain with nearest-neighbor interactions. But, these motions are substantially different from the
multibreathers we study in this work, since in the case of phonobreathers all the sites of the lattice are excited in the anticontinuoum limit, while in our case
there is a specific number of $n+1$ adjacent central oscillators. This fact together with the assumption of periodic boundary conditions give rise to different persistence conditions than (\ref{per_r1}) which support phase-shift configurations.

\section{Discussion about the stability of multibreathers in Klein-Gordon chains}\label{stability}
In \cite{KK09}, under the assumption of non-existence of phase-shift breathers, the spectral stability of the multibreather solutions in one-dimensional Klein-Gordon chains with nearest-neighbor interactions, is well established by the theorem
\begin{theorem}\label{thm1}
Under the assumption that (\ref{per_r1}) has only the $\phi_i=0,\pi$ solutions, in systems of the form (\ref{ckg}), if $P\equiv\e\frac{\pa \w}{\pa J}<0$ the only configuration which leads to
linearly stable multibreathers, for $|\e|$ small enough, is the one
with $\phi_i=\pi\quad\forall i=1\ldots n$ (anti-phase
multibreather), while if $P>0$ the only linearly
stable configuration, for $|\e|$ small enough, is the one with
$\phi_i=0\quad\forall i=1\ldots n$ (in-phase multibreather).
Moreover, for $P<0$ (respectively, $P>0$), for unstable configurations, their number of
unstable eigenvalues will be precisely equal to the number of nearest
neighbors which are in- (respectively, in anti-) phase between them.
\end{theorem}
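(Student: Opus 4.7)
The plan is to reduce the spectral stability of a standard multibreather to counting the signs of the eigenvalues of the Hessian of $\avh$ at the critical point, and then to read those signs off the data already supplied by Theorem~\ref{thm2}. As the starting point I would invoke the Floquet stability mechanism of \cite{KK09} in the MacKay--Aubry spirit: under the non-resonance and anharmonicity hypotheses, the $2n$ non-trivial Floquet eigenvalues associated with the phase degrees of freedom of the $n+1$ central oscillators satisfy, to leading order as $\e\to 0$,
\beq
\sigma_k^{2}\simeq -P\,\mu_k,\qquad P\equiv\e\,\frac{\pa\w}{\pa J},
\eeq
where $\mu_1,\dots,\mu_n$ are the non-zero eigenvalues of the Hessian $\mathcal{H}_{ij}=\pa^{2}\avh/\pa w_i\,\pa w_j$. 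Linear stability is therefore equivalent to $P\mu_k>0$ for every $k$, and the number of unstable eigenvalues equals the number of indices for which $P\mu_k<0$.

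Next I would compute $\mathcal{H}$ at a standard configuration. Writing $\avh=\sum_{s=1}^{n}f(\phi_s)$ with $f(\phi)=-\tfrac{1}{2}\sum_{m=1}^{\infty}A_m^{2}\cos(m\phi)$ and $\phi_s=w_s-w_{s-1}$, the chain rule yields the factorisation $\mathcal{H}=B^{T}\Lambda B$, where $\Lambda=\mathrm{diag}(f''(\phi_1),\dots,f''(\phi_n))$ and $B$ is the $n\times(n+1)$ incidence matrix of the central block. The one-dimensional kernel of $B$, corresponding to a uniform shift of all phases, produces the unavoidable zero eigenvalue of $\mathcal{H}$; since $B^{T}$ is injective, Sylvester's law of inertia applied to $B^{T}\Lambda B$ shows that the non-zero $\mu_k$ have the same inertia as $\Lambda$. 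Hence $\#\{\mu_k>0\}$ and $\#\{\mu_k<0\}$ equal, respectively, the number of indices $i$ with $f''(\phi_i)>0$ and with $f''(\phi_i)<0$.

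Finally I would read the signs of $f''(0)$ and $f''(\pi)$ from~(\ref{eq:M_final}) of Theorem~\ref{thm2}. Since $2f'(\phi)=M(\phi)$, a direct differentiation of~(\ref{eq:M_final}) at $\phi=0$ and $\phi=\pi$ gives $f''(0)>0$ and $f''(\pi)<0$; thus each $\phi_i=0$ contributes a positive $\mu_k$ (one per in-phase adjacent pair) and each $\phi_i=\pi$ contributes a negative one (one per anti-phase adjacent pair). Combining with the Floquet criterion, for $P>0$ the instabilities come from $\mu_k<0$ so the unstable count equals the number of anti-phase neighbours and full stability forces every $\phi_i=0$; for $P<0$ the instabilities come from $\mu_k>0$ so the unstable count equals the number of in-phase neighbours and full stability forces every $\phi_i=\pi$. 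The main obstacle I anticipate is establishing the leading-order spectral identity $\sigma_k^{2}\simeq-P\mu_k$ with enough uniformity in $\e$ to yield an exact eigenvalue count, but this perturbative analysis has already been carried out in~\cite{KK09}, so the genuinely new content here is the removal of the ``no phase-shift solutions'' assumption, supplied by Theorem~\ref{thm2}.
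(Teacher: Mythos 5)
First, a point of orientation: the paper does not prove this theorem at all --- it is quoted from \cite{KK09}, and the only contribution of the present paper to it is Theorem~\ref{thm2}, which discharges the standing hypothesis that (\ref{per_r1}) has no solutions besides $\phi_i=0,\pi$. So your proposal cannot be checked against an in-paper proof; measured against the argument of \cite{KK09}, it is a faithful and essentially correct outline. Your chain --- exponents $\sigma_k^2\simeq-P\mu_k$, inertia of the Hessian of $\avh$ via the factorisation $B^T\Lambda B$ through the incidence matrix, and the signs $f''(0)>0$, $f''(\pi)<0$ read off from (\ref{eq:M_final}) --- reproduces the correct stability dichotomy and the correct unstable-eigenvalue count in both sign branches of $P$. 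One bookkeeping remark: in \cite{KK09} the reduced spectral problem is set up in the $(\phi_i,I_i)$ variables, where the exponents are governed by the product of the diagonal matrix $\mathrm{diag}\bigl(f''(\phi_i)\bigr)$ with a positive-definite tridiagonal matrix arising from $I_i=\sum_{j\geq i}J_j$; the eigenvalue signs there are obtained by congruence with the square root of that matrix. Your angle-variable formulation $B^T\Lambda B$ (with the extra zero mode from the uniform phase shift) is the congruent counterpart and gives the same signature, so this is a difference of presentation, not substance. The one real caveat is the one you name yourself: the entire analytic content of the theorem is the statement that the $2n$ nontrivial Floquet exponents satisfy $\sigma_k^2=-P\mu_k$ to leading order with an error small enough that the count of off-circle multipliers is exact for small $|\e|$, together with the fact that the rest of the spectrum stays on the unit circle under the non-resonance condition. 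Deferring that to \cite{KK09} makes your text a reduction of the theorem to the main result of that reference plus an elementary sign computation rather than a self-contained proof --- but that is exactly the division of labour the present paper itself adopts, so there is no gap relative to what the paper supplies.
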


After the proof of theorem \ref{thm2} the assumption is no longer necessary.  
So, from the above, we conclude that the non-existence of phase-shift breathers is significant, not only in order to exclude the non-supported multibreather configurations in 1D KG chains but also in order to categorize the supported ones in terms of their corresponding stability.

It is important to bare in mind that the above theorem holds for $|\e|$ close to the anticontinuum limit, since as the coupling strength increases the stability of a specific configuration can change through a Hamiltonian-Hopf bifurcation. On the other hand the result of Theorem \ref{thm2} on the existence (or non-existence) of the discussed configurations holds independently of this change of the stability. 

We have to note here that in a recent paper \cite{pelisak} the authors have studied the stability of configurations with holes between the excited oscillators in the anticontinuum limit, by using higher order perturbation theory. But, in this work we will only consider adjacent central oscillators.

\section{Conclusions}
It is well known that one-dimensional Klein-Gordon (KG) lattices with nearest-neighbor (NN) interactions support multibreathers  with the standard phase-difference $\phi_i=0, \pi$ between adjacent central oscillators. On the other hand there were strong evidences  (including numerical computations) suggesting that phase-shift breathers i.e. multibreathers with $\phi_i\neq0\mbox{ or }\pi$ cannot exist in this classical KG setting.

In the present work we prove that, indeed, the only configurations that can exist in a classical KG 1D lattice with NN interactions are the standard ones ($\phi_i=0,\pi$). This fact excludes the existence of phase-shift breathers and, as it has been shown in \cite{KK09}, it also clarifies the stability image for the existing multibreathers i.e. if
$P\equiv\e\frac{\pa \w}{\pa J}<0$ the anti-phase configuration is the only stable
one, while for $P>0$ the in-phase configuration is
the only stable multibreather solution.

On the other hand, as it has been recently shown \cite{lri}, in 1D KG chains where interactions with range larger than just the nearest-neighbor ones are considered, phase-shift breathers can be supported, giving rise to radically different stability scenaria.

Future directions of this work could include the study of the possibility of use of similar techniques in order to infer supported solutions in the case of lattices with longer range interactions or in the case of multibreathers with holes between the central oscillators.

\section*{Acknowledgements}
This research has been co-financed by the European Union (European Social Fund - ESF) and Greek national funds through the Operational Program "Education and Lifelong Learning" of the National Strategic Reference Framework (NSRF) - Research Funding Program: THALES. Investing in knowledge society through the European Social Fund.

The author would like to thank Prof. D.E. Pelinovsky for a conversation which rekindled the interest about this work  and Prof. P.G. Kevrekidis for his invaluable help through many fruitful discussions and continuous encouragement. The author would like to thank also the anonymous referee who indicated that the previous version of the proof was unnecessarily long. 

\bibliographystyle{plain}
\bibliography{non}

\end{document}